\newtheorem{thm}{Theorem}[section]
\newtheorem{lem}[thm]{Lemma}
\newtheorem{cor}[thm]{Corollary}
\newtheorem{rem}[thm]{Remark}
\theoremstyle{definition} 
\newtheorem{defn}[thm]{Definition}
\DeclareMathOperator\supp{supp}
\newcommand{\remove}[1]{}
\newcommand{\bc}{{\bf  c}}
\begin{document}
\title[An Efficient  Algorithm for Group Testing with Runlength Constraints]{An Efficient  Algorithm for Group Testing with Runlength Constraints}

\author[M.Dalai]{Marco Dalai}
\address{DII, Universit\`a degli Studi di Brescia, Via
Branze 38, I-25123 Brescia, Italy}
\email{marco.dalai@unibs.it}
\author[S.DellaFiore]{Stefano Della Fiore}
\author[A.Rescigno]{Adele A. Rescigno}
\author[U.Vaccaro]{Ugo Vaccaro}

\address{DI, Universit\`a degli Studi di Salerno, Via Giovanni Paolo II 132, 84084 Fisciano, Italy}
\email{sdellafiore@unisa.it, arescigno@unisa.it, uvaccaro@unisa.it}

\subjclass[2010]{05D40}
\keywords{Lov\'asz Local Lemma, Group Testing, superimposed codes,  runlength-constrained codes}

\begin{abstract}
In this paper, we provide an efficient  algorithm to construct almost 
optimal $(k,n,d)$-superimposed codes with runlength constraints. A $(k,n,d)$-superimposed code of length $t$ is a $t \times n$ binary matrix such that any two 1's  in each column are separated by a run of at least $d$ 0's, and such that for any column $\mathbf{c}$ and any other $k-1$ columns, there exists a row where $\mathbf{c}$ has $1$ and all the remaining $k-1$ columns have $0$. These combinatorial structures  were introduced by Agarwal {\em et al.} \cite{Olgica}, in the context of Non-Adaptive Group Testing algorithms with runlength constraints.

By using Moser and Tardos' constructive version of the 
Lov\'asz Local Lemma, we provide an efficient
randomized Las Vegas algorithm of complexity $\Theta(t n^2)$ for the construction of $(k,n,d)$-superimposed codes of length $t=O(dk\log n +k^2\log n)$. 
We also show that the length of our codes is shorter, for $n$ sufficiently large,
than that of the codes whose existence was proved in \cite{Olgica}.
\end{abstract}

\maketitle

\section{Introduction}\label{sec:Intro}
In this paper, we devise  efficient construction  algorithms for 
$(k, n,d)$-\emph{superimposed} codes recently introduced by Agarwal \emph{et al.} in \cite{Olgica} and defined as follows:

\begin{defn}[\cite{Olgica}]\label{def:dsuper}
Let $k$, $n$, $d$ be positive integers,  $k \leq n$.
A $(k, n,d)$-\emph{superimposed} code is a $t \times n$ binary matrix $M$ such that
\begin{itemize}
    \item[1)]any two 1's  in each column of $M$  are  separated by a run of  at least $d$ 0's,  
    \item[2)] for any $k$-tuple of the columns of $M$ and for any 
    column $\mathbf{c}$ of the \emph{given} $k$-tuple, it holds that
    there exists a row ${i \in \{1,\ldots , t\}}$ such that $\mathbf{c}$ has symbol $1$ in row $i$ and all the remaining $k-1$ columns of the $k$-tuple have symbol $0$ in row $i$.
\end{itemize}The number of rows $t$ of $M$ is called the length of the $(k,n,d)$-superimposed code.
\end{defn}

$(k, n,d)$-{superimposed} codes  were introduced within  the context of 
Non-Adaptive Group Testing algorithms for topological DNA-based data storage,  
and represent 
one of the main instruments to derive the strong results obtained 
therein \cite{Olgica} (see also \cite{T}). The parameter of $(k, n,d)$-{superimposed} codes 
that one wants  to optimize (i.e., minimize) is the length $t$ of the code.
Indeed,  this is the parameter that 
mostly affects the DNA-based data storage  algorithms considered in  \cite{Olgica}.
Using the probabilistic method the authors of \cite{Olgica}
proved that $(k, n,d)$-{superimposed} codes
of length 
$t=O(dk\log n +k^2\log n)$ exists and they provided a randomized Montecarlo algorithm, in the sense that it gives, with high probability,
a $(k, n, d)$-superimposed code whose length is upper bounded by this quantity. They also proved that any 
$(k, n,d)$-{superimposed} code must have length 
\begin{equation}\label{eq:lwo}
t\geq \min\left (n, \Omega\left(\frac{d k}{\log (d k)}\log n + \frac{k^2}{\log k} \log n\right) \right).
\end{equation}

A preliminary study of the questions treated in the present paper was done in \cite{DDV} where we improved some of the existential bounds  of \cite{Olgica} by showing the existence of 
$(k, n,d)$-{superimposed} codes having  shorter lengths than the codes~of~\cite{Olgica}. In \cite{Olgica} and \cite{DDV}, the authors left open the problem of devising 
an efficient polynomial time algorithm to construct $(k, n,d)$-{superimposed} codes
of length $t=O(dk\log n +k^2\log n)$. More precisely, the results 
of \cite{Olgica} and \cite{DDV} only imply the existence of
$\Theta(n^k)$-time algorithms for constructing $(k,n,d)$-{superimposed} codes
of length $t=O(dk\log n +k^2\log n)$. We note that such algorithms achieve a time-complexity of $\Theta(n^k)$ since in order to see if a randomized constructed matrix is a $(k,n,d)$-superimposed code they need to check conditions that involved $k$-tuple of columns. It is clear that already for moderate values of
$k$, those algorithms are impractical. In view of the relevance  of the
application scenario considered in \cite{Olgica}, it is quite important
to have an  efficient algorithm for  constructing $(k, n,d)$-{superimposed} codes
of length $t=O(dk\log n +k^2\log n)$.
The purpose of this paper is to provide a randomized Las Vegas $\Theta(k(k+d)n^2\ln n)$-time algorithm to construct such codes that is polynomial both in $n$ and $k$. We remark that
our algorithm produces almost optimal codes (in the asymptotic sense)
because of the lower bound (\ref{eq:lwo}).

In the same spirit of this work, in \cite{Cheng2} and \cite{Yeh}, the authors provided using probabilistic methods, such as the Lov\'asz Local Lemma, new bounds on the length of $(k+1,n,0)$-superimposed codes (also known as $k$-disjunct matrices) with fixed-weight columns. Their approaches can be adapted to derive $\Theta(n^k)$-time algorithms and bounds on the length of $(k,n,d)$-superimposed codes. This further motivates our work in studying algorithms to construct $(k,n,d)$-superimposed codes that are polynomial both in $n$ and $k$.

Before going into the technical details, we would like to recall that 
$(k,n,0)$-{superimposed} codes correspond to the classical superimposed codes 
(a.k.a. cover free families)
introduced in \cite{KS,Erdos}, and extensively studied since then.
We refer to the excellent survey papers \cite{Deppe,IM} for a broad 
discussion of the relevant literature, and to the monographs \cite{DuHwang,JSA}
for an account of the  applications of superimposed codes to group testing, 
multi-access communication, data security, data compression, and several 
other different areas. It is  likely that 
also $(k, n,d)$-{superimposed} codes
will find applications    outside the original  scenario considered in 
\cite{Olgica}.

\remove{

Group Testing refers to the scenario in which one has a population $I$ of individuals and an unknown subset $P$ of $I$, commonly referred to as ``positives''. The goal is to determine the unknown elements of $P$ by performing tests on arbitrary subsets $A$ of $I$ (called {\em pools}), and the result of the test is assumed to return the value 1 (positive) if $A$ contains at least one element of the unknown set $P$, the value 0 (negative), otherwise. The problem was first introduced by Dorfman \cite{Dorf} during WWII, in the context of mass blood testing. Since then, Group Testing techniques have found applications in a large variety of areas, ranging from DNA sequencing to quality control, data security to network analysis, and much more. We refer the reader to the excellent monographs \cite{DuHwang,JSA} for an account of the vast literature on the subject.

Group Testing procedures can be adaptive or non-adaptive. In adaptive Group Testing, the tests are performed sequentially, and the content of the pool tested at the generic step $i$ might depend on the previous $i-1$ test outcomes. Conversely, in non-adaptive Group Testing all pools are a-priori set, and tests are carried out in parallel. Non-adaptive Group Testing (NAGT) schemes typically require  more tests to discover the positives, but they are faster since tests can be performed in parallel.

In NAGT, the algorithm to determine the positives is usually represented by means of a $t\times n$ binary matrix $M$, where each row of $M$ represents a test while each column is associated to a distinct member of the population $I=\{1, 2, \ldots , n\}$. More precisely, we have 
$M_{ij}=1$ if and only if the member $j\in I$ belongs to the $i$-th test.  In general, one assumes a known upper bound $k$ on the cardinality of the unknown set of positives $P$. Having said that, the property one usually requires for $M$ to represent a correct (and efficiently decodable) NAGT is the following \cite{DuHwang}:
for any $k$-tuple of the $n$ columns of $M$ we demand  that for any column $\mathbf{c}$ of the given $k$-tuple, there exists a row ${i \in \{1,\ldots , t\}}$ such that $\mathbf{c}$ has symbol $1$ in row $i$ and all the remaining $k-1$ columns of the $k$-tuple have a $0$ in the same row $i$. This condition renders matrices $M$ with such a property equivalent to the well known superimposed codes introduced in the seminal paper by Kautz and Singleton \cite{KS} and, independently, by Erd\"os {\em et al.} in \cite{Erdos}.

Motivated by applications in topological DNA-based data storage, the authors of \cite{Olgica} introduced an interesting new variant of NAGT, in which the associated test matrix $M$ has to satisfy additional constraints, in order to comply with the biological constraints of the problem they want to solve. Informally, one of the main problems studied in \cite{Olgica} is to  show the existence of a superimposed code $M$ with a "small" number $t$ of rows and satisfying the following additional property: any two 1's  in each column are  separated by a run of  at least $d$ 0's. We refer the reader to \cite{Olgica} for the rationale behind this runlength constraint. The main achievability result obtained in \cite{Olgica} says that codes with these properties exist for $t=\Theta(dk\log n +k^2\log n)$.

In this paper, we provide, using the Lov\'asz Local Lemma, new randomized Las Vegas algorithms of complexity $O(t n^2)$ for the construction of $(k,n,d)$-superimposed codes of length $t=\Theta(dk\log n +k^2\log n)$. Hence we show the existence of algorithms that are polynomial in the code size $n$. We note that to explicitly construct $(k,n,d)$-superimposed codes of length $t$ any algorithm requires $\Omega(tn)$ time. Then, we show that better results than those derived in \cite{Olgica} can be obtained using different random coding constructions which also admit simpler analyses. We refer the reader to \cite{FL} and \cite{HLLT} for similar applications of these methods which are used to provide probabilistic constructions of separating codes and identifiable parent property codes.
}
\section{Preliminaries}

Throughout the paper, the logarithms without subscripts are in base two, and we denote with $\ln (\cdot)$ the natural logarithm.  We denote by $[a,  b]$ the set ${\{a, a+1, \ldots, b\}}$.
 Given integers $w$ and $d$, a binary $(w,d)$-vector $\mathbf{x}$ is 
 a  vector of Hamming weight $w$ (that is, the number of $1$'s in 
 $\mathbf{x}$ is equal to $w$), such that any 
 two $1$'s in $\mathbf{x}$ are separated by a run of at least d $0$’s.

\remove{\begin{defn}[\cite{Olgica}]
Let $k$, $n$, $d$ be positive integers,  $k \leq n$.
A $(k, n,d)$-\emph{super\-\-imposed} code is a $t \times n$ binary matrix $M$ such that any two 1's  in each column of $M$  are  separated by a run of  at least $d$ 0's,  and for any $k$-tuple of the columns of $M$ we have that for any column $\mathbf{c}$ of the given $k$-tuple, there exists a row ${i \in [1,t]}$ such that $\mathbf{c}$ has symbol $1$ in row $i$ and all the remaining $k-1$ columns of the $k$-tuple are equal to $0$.  The number of rows $t$ of $M$ is called the length of the $(k,n,d)$-superimposed code.
\end{defn}
}

We recall, for positive integers $c \leq b \leq a$,
 the following well-known properties  of binomial coefficients:
\begin{equation}\label{eq:ULBinom}
    \left(\frac{a}{b}\right)^b \leq \binom{a}{b} \leq \frac{a^b}{b!} \leq \left(\frac{ea}{b}\right)^b\,,
\end{equation}
\begin{equation}\label{eq:idBinom}
    \binom{a}{b} \binom{b}{c} = \binom{a}{c} \binom{a-c}{b-c}\,.
\end{equation}
We shall also need  the following technical lemma from  \cite{Vaccaro1}.  

\remove{\begin{lem}\label{lem:firstBoundBinomial}
Let $a, b, c$ be positive integers such that $c \leq a \leq b$. We have that
$$
	\frac{a}{b} \cdot \frac{a-c}{b-c} \leq \left(\frac{a - \frac{c}{2}}{b-\frac{c}{2}}\right)^2\,.
$$
\end{lem}
\begin{proof}
Clearly $a (a-c) c^2 \leq b (b-c) c^2$. Then adding the quantity $4ab (a-c)(b-c)$ to both members implies that $a(a-c) (2b-c)^2 \leq b(b-c) (2a-c)^2$. Therefore, Lemma~\ref{lem:firstBoundBinomial} follows.
\end{proof}
}
\begin{lem}[\cite{Vaccaro1}]\label{cor:boundBinomial}
Let $a, b, c$ be positive integers such that $ c \leq a \leq b$.  We have that
\begin{equation}\label{eq:boundB}
	\frac{\binom{a}{c}}{\binom{b}{c}} \leq \left(\frac{a - \frac{c-1}{2}}{b - \frac{c-1}{2}}\right)^{c}\,.
\end{equation}
\end{lem}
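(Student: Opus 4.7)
The plan is to rewrite the left-hand side of (\ref{eq:boundB}) as a product and symmetrize around the midpoint $(c-1)/2$. Expanding the binomial coefficients gives
$$\frac{\binom{a}{c}}{\binom{b}{c}} = \prod_{i=0}^{c-1}\frac{a-i}{b-i}.$$
Setting $s := a - (c-1)/2$, $t := b - (c-1)/2$, and $\delta_i := i - (c-1)/2$, we have $a-i = s - \delta_i$ and $b-i = t - \delta_i$, while the right-hand side of (\ref{eq:boundB}) becomes $(s/t)^{c}$. The hypothesis $c \leq a \leq b$ ensures $0 < s \leq t$ and $|\delta_i| < t$, so every factor that appears below is positive.

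Next I would exploit the fact that $\{\delta_0,\ldots,\delta_{c-1}\}$ is symmetric about $0$: pair $i$ with $c-1-i$, for which $\delta_{c-1-i} = -\delta_i$. Each symmetric pair contributes
$$\frac{(s-\delta_i)(s+\delta_i)}{(t-\delta_i)(t+\delta_i)} = \frac{s^{2} - \delta_i^{2}}{t^{2} - \delta_i^{2}}.$$
The pointwise inequality $\frac{s^{2}-\delta^{2}}{t^{2}-\delta^{2}} \leq \frac{s^{2}}{t^{2}}$ is equivalent, after clearing the positive denominators, to $s^{2}\delta^{2} \leq t^{2}\delta^{2}$, which holds by $0 \leq s \leq t$. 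Thus each paired factor is bounded by $(s/t)^{2}$. When $c$ is odd, the unpaired middle factor occurs at $i=(c-1)/2$ and equals $s/t$ exactly. Multiplying the $\lfloor c/2\rfloor$ pair bounds (together with the middle factor when $c$ is odd) yields $(s/t)^{c}$, which is the desired estimate.

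The main obstacle is essentially bookkeeping: one must verify positivity of all quantities ($s>0$ and $t - |\delta_i| > 0$) from the hypothesis $c\leq a\leq b$, and separately track the parity of $c$; nothing deeper is required. An equally clean alternative is to take logarithms and apply Jensen's inequality to $f(i) := \ln(a-i) - \ln(b-i)$; a direct computation gives $f''(i) = (b-i)^{-2} - (a-i)^{-2} \leq 0$ (since $a\leq b$), so $f$ is concave in $i$, and applying Jensen at the arithmetic mean $\tfrac{1}{c}\sum_{i=0}^{c-1} i = (c-1)/2$ delivers the inequality in one line. I would present the pairing argument as the primary proof because it is elementary and avoids calculus, keeping the Jensen reduction as a remark.
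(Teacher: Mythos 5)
Your pairing argument is correct and is essentially the same proof as the paper's (the paper also expands the ratio as $\prod_{i=0}^{c-1}\frac{a-i}{b-i}$, pairs the $i$-th factor with the $(c-1-i)$-th, and bounds each pair by the square of the central ratio via the equivalent inequality $s^2\delta^2 \le t^2\delta^2$). The Jensen/concavity remark is a valid alternative but not needed.
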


\remove{
\begin{proof}
Expanding the LHS of \eqref{eq:boundB} we get
\begin{equation}\label{eq:expBin}
	\frac{\binom{a}{c}}{\binom{b}{c}} = \frac{a}{b} \cdot \frac{a-1}{b-1} \cdots \frac{a-c+1}{b-c+1}\,.
\end{equation}
Let us group the terms in \eqref{eq:expBin} into pairs as follows
\begin{equation}\label{eq:termsBin}
	\frac{a-i}{b-i} \cdot \frac{a- (c-i-1)}{b-(c-i-1)} \text{ for } i=0,\ldots, \left\lceil \frac{c-1}{2} \right\rceil -1 \,.
\end{equation}
If $c$ is odd then we leave alone the term $(a-\frac{c-1}{2})/(b-\frac{c-1}{2})$.  By Lemma \ref{lem:firstBoundBinomial},  each term in \eqref{eq:termsBin} can be upper bounded by
$$
	\frac{a-i}{b-i} \cdot \frac{a- (c-i-1)}{b-(c-i-1)} \leq \left(\frac{a - \frac{c-1}{2}}{b-\frac{c-1}{2}}\right)^2\,.
$$
Hence Corollary \ref{cor:boundBinomial} follows.
\end{proof}
}

Finally, we  recall  here the celebrated algorithmic version of the 
Lov\'asz Local Lemma for the symmetric case, due to Moser and Tardos \cite{MT}.
It represents one of the main tools to derive the results of this paper. 
We first recall the setting for the Lov\'asz Local Lemma in 
the random-variable
scenario. The relevant probability space $\Omega$ is defined by 
$n$ mutually independent random variables $X_1, \ldots , X_n$, taking values in a finite
set $\mathcal{X}$. One is interested in a set of
events $\mathcal{E}$ in the probability space $\Omega$, 
(generally called \lq\lq bad events", that is, events one wants to
avoid), where each event $E_i\in \mathcal{E}$
 only depends on 
$\{X_j : j \in S_i\}$ for some subset $S_i\subseteq [1,n]$, for 
$i=1, \ldots, |\mathcal{E}|$.
Note that two events $E_i,E_j$ are independent  if $S_i\cap S_j=\emptyset.$ 
A \textit{configuration} in the context of the Lovász Local Lemma is a specific assignment of values to the set of random variables involved in defining the events.
Sampling a random variable $X_i$ means generating a
value $x\in \mathcal{X}$ in such a way that the probability of generating 
$x$ is in accordance with  the probability distribution of 
the random variable $X_i$.

As said before, in the applications of the  Lov\'asz Local Lemma 
the events $E_i$'s are bad-events that one wants to avoid, that is, one seeks
a configuration such that all the events  $E_i$'s do not hold.
In the seminal paper \cite{MT}  Moser and Tardos 
introduced a simple randomized algorithm 
that produces such a configuration, under the same hypothesis 
of the classical Lovász Local Lemma. The algorithm is the following:

\begin{algorithm}[!ht]

Sample the random variables $X_1, \ldots , X_n$ from their distributions in $\Omega$

\While{\rm some event is true on $X_1, \ldots , X_n$}
{Arbitrarily select some true event $E_i$\\
For each $j\in S_i$, sample $X_j$ from its distribution in $\Omega$}

\caption{The MT algorithm }
\label{algo0}
\end{algorithm}

 Moser and Tardos \cite{MT} proved the following important result
 (see also  \cite{knuth}, p. 266).

\begin{lem}[\cite{MT}] \label{lem:LLL}
Let $\mathcal{P}$ be a finite set of mutually independent random variables in a probability space and let $\mathcal{E}= \{E_{1},E_{2},\ldots, E_{m}\}$ be a set of $m$ events where each $E_i$ is determined by a subset $S_i$ of the random variables and, for each $i$, $S_j\cap S_i\neq\emptyset$ for at most $D$ values of $j\neq i$. 
Suppose that $\Pr(E_{i})\leq P$ for all $1\le i\leq m$.
If $eP D\leq1$, then $\Pr(\cap_{i=1}^{m}\overline{E_{i}})>0$.
Moreover, 
\textbf{\rm \bf Algorithm \ref{algo0}} finds a configuration avoiding all events $E_i$ 
by using an average number 
of resampling of at  most $m/D$.
\end{lem}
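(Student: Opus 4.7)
The plan is to follow the original witness-tree argument of Moser and Tardos. For any run of Algorithm~\ref{algo0} I would record the log $C=(E_{i_1},E_{i_2},\ldots)$ of events that are resampled, and, for each step $t$, build a labeled rooted tree $\tau_t$ as follows: start with a single root labeled $E_{i_t}$, and then scan the previous log entries $E_{i_{t-1}},E_{i_{t-2}},\ldots,E_{i_1}$ in reverse order, attaching each entry $E_{i_s}$ as a deepest possible child of some already-present node bearing a label $E_j$ with $S_j\cap S_{i_s}\neq\emptyset$, and discarding it otherwise. By construction each $\tau_t$ is a \emph{proper} witness tree: siblings always carry distinct labels, and the label of every child is ``a neighbor'' of the label of its parent in the dependency graph induced by the intersection pattern of the $S_i$'s.

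The central ingredient is the \emph{Witness Tree Lemma}: for any fixed proper witness tree $\tau$ with nodes carrying labels $E_{v_1},\ldots,E_{v_s}$, the probability (over the randomness of the algorithm) that $\tau_t=\tau$ for some $t$ is at most $\prod_{j=1}^{s}\Pr(E_{v_j})\leq P^{s}$. I would prove this by a coupling argument: imagine that whenever the algorithm needs to resample a variable $X_k$ it reads the next unused entry from an a~priori tape of independent fresh copies of $X_k$; then, whenever $\tau$ truly appears in the execution, the $s$ events $E_{v_j}$ must each hold on pairwise-disjoint segments of that tape, and the probability of this intersection is exactly the displayed product by independence.

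Next, I would bound the expected number of times each particular $E_i$ is resampled by summing the above probability over all proper witness trees rooted at $E_i$. The standard trick is to identify this sum with the expected size of a rooted Galton--Watson-type process in which each node labeled $E_j$ independently produces, for every neighbor $E_\ell$, a child labeled $E_\ell$ with probability $x_\ell$ and no child with probability $1-x_\ell$. Choosing the symmetric weights $x_\ell = 1/(D+1)$ and invoking the elementary inequality $\bigl(1+1/D\bigr)^{D}<e$ together with the hypothesis $ePD\leq 1$ guarantees the required local condition $\Pr(E_j)\leq x_j\prod_{\ell\in N(j)}(1-x_\ell)$, and a short generating-function calculation then shows that the sum of $\prod_v\Pr(E_v)$ over proper trees rooted at any fixed $E_i$ is at most $x_i/(1-x_i)=1/D$. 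Summing over $i=1,\ldots,m$ yields the claimed $m/D$ bound on the expected total number of resamplings; in particular the algorithm terminates with probability one, and hence reaches a configuration avoiding all $E_i$, which establishes $\Pr\bigl(\bigcap_{i=1}^{m}\overline{E_i}\bigr)>0$.

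The main obstacle is the Witness Tree Lemma: one must set up the coupling so that the multiset of fresh-tape reads assigned to the nodes of $\tau$ is genuinely disjoint and so that the reversed-time attachment rule really does force $\tau_t$ to be proper. Once this is done, the rest is essentially branching-process bookkeeping, and the numerical constant $1/D$ per event comes out directly from the symmetric choice of weights $x_i=1/(D+1)$.
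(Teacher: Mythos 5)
The paper does not prove this lemma at all: it is imported verbatim from Moser--Tardos \cite{MT} (see also \cite{knuth}), so there is no internal proof to compare against. Your reconstruction follows the standard MT architecture --- reverse-scan witness trees, the Witness Tree Lemma via a fresh random tape coupling, and the Galton--Watson counting of proper trees --- and that skeleton is sound: the properness of $\tau_t$, the disjointness of the tape reads assigned to distinct nodes, and the identification of $\sum_\tau \prod_v \Pr(E_{[v]})$ with a branching-process quantity are all correctly described.

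There is, however, a genuine numerical gap in the step where you verify the local condition. With the symmetric weights $x_\ell = 1/(D+1)$, the asymmetric MT criterion $\Pr(E_j)\le x_j\prod_{\ell\in N(j)}(1-x_\ell)$ requires $P \le \frac{1}{D+1}\bigl(\tfrac{D}{D+1}\bigr)^{D} = \frac{D^D}{(D+1)^{D+1}}$. The inequality $(1+1/D)^D<e$ only shows that this threshold exceeds $\frac{1}{e(D+1)}$, i.e.\ it is implied by $eP(D+1)\le 1$; it is \emph{not} implied by the stated hypothesis $ePD\le 1$, because $\frac{D^D}{(D+1)^{D+1}} = \frac{e}{(1+1/D)^{D+1}}\cdot\frac{1}{eD} < \frac{1}{eD}$ (as $(1+1/D)^{D+1}>e$). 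Indeed no symmetric choice of $x$ can close this gap, since $\max_x x(1-x)^D = \frac{D^D}{(D+1)^{D+1}}$. So as written you have proved the standard symmetric corollary of MT (hypothesis $eP(D+1)\le 1$, conclusion $m/D$), not the lemma as stated. Two repairs are possible: (i) prove the $eP(D+1)\le 1$ version, which is all the paper actually needs --- in Theorem \ref{th:LLPairs} this merely replaces $2n-4$ by $2n-3$ inside the $(\cdot)^{1/(\lambda+1)}$ term, a negligible change; or (ii) to obtain the lemma exactly as stated, invoke a strictly sharper analysis, e.g.\ Pegden's refinement of the witness-tree count or the Kolipaka--Szegedy extension of Moser--Tardos to Shearer's criterion, noting that $P\le \frac{1}{eD}\le \frac{(D-1)^{D-1}}{D^D}$ puts one inside Shearer's region. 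Either way, the claim that $ePD\le 1$ ``guarantees the required local condition'' for $x_\ell=1/(D+1)$ is false and must be amended.
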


\section{New Algorithms for $(k,n,d)$-superimposed  codes}\label{sec:superimposed}

We aim to efficiently construct $(k,n,d)$-superimposed codes with a small length. 
The difficulty faced in \cite{Olgica,DDV} was essentially due to the fact
that the constraints  a $t\times n$ binary matrix has to satisfy in order to
be a $(k,n,d)$-superimposed code involve all the $\binom{n}{k}$
$k$-tuples of
columns of $M$. Checking  whether those conditions are satisfied or not
requires time $\Theta(n^k)$. To overcome this difficulty, we use an idea 
of \cite{Erdos,KS}. That is,  we first introduce an auxiliary class of binary
matrices where the constraints involve {\em only} pairs of columns. Successively,
we show that for suitably chosen parameters such a class of matrices 
give rise to 
$(k,n,d)$-superimposed codes with small length $t$. This  implies that we need to check 
the validity of the constraints only for the  $\Theta(n^2)$ pairs of columns.
This observation and   Lemma \ref{lem:LLL}, will allow
us to provide an  efficient algorithm to construct $(k,n,d)$-superimposed codes.

It is convenient to first 
consider the following class of $(k,n,d)$-superimposed codes.
\begin{defn}
A $(k,n,d,w)$-\emph{superimposed} code is a $(k,n,d)$-superimposed code with the additional constraint that each column has Hamming weight $w$, that is, 
each column of the code  is a binary $(w,d)$-vector.
\end{defn}

We now introduce the auxiliary class of matrices 
mentioned above.
\begin{defn} \label{matrix}
Let $n, d, w, \lambda$ be positive integers. A $t \times n$ binary matrix $M$ is a $(n,d,w,\lambda)$-matrix if the following properties hold true:
\begin{enumerate}
    \item each column of $M$ is a binary $(w, d)$-vector;
    \item any pair of columns $\mathbf{c}, \mathbf{d}$ of $M$ have at most $\lambda$ $1$'s in common, that is, there are at most $\lambda$ rows among the $t$'s  where columns $\mathbf{c}$ and $\mathbf{d}$  both 
    have~symbol~$1$.
\end{enumerate}
\end{defn}

$(n,d,w,\lambda)$-matrices are related to $(k,n,d,w)$-superimposed codes 
 by way of the following easy result.
\begin{lem}\label{LaToSup}
A binary $(n,d,w, \lambda)$-matrix   $M$ of dimension 
$t\times n$, with parameter 
$\lambda=\left\lfloor (w-1)/(k-1)\right\rfloor$, 
 is a $(k,n,d,w)$-superimposed code of length $t$.
\end{lem}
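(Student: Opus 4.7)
The plan is a direct union-bound argument. I need to verify the two defining properties of a $(k,n,d,w)$-superimposed code: the runlength condition on each column, the Hamming weight $w$ on each column, and the $k$-tuple separating property.

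First, properties (1) of Definition \ref{matrix} (each column is a $(w,d)$-vector) immediately gives both the weight-$w$ condition and the runlength condition on columns of $M$, so the first bullet of Definition \ref{def:dsuper} and the weight constraint are for free. The real content is the second bullet, namely: for every column $\mathbf{c}$ and every choice of $k-1$ other columns $\mathbf{c}_1,\dots,\mathbf{c}_{k-1}$, some row $i$ has $\mathbf{c}$ equal to $1$ while all of $\mathbf{c}_1,\dots,\mathbf{c}_{k-1}$ equal $0$ there.

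To establish this, fix $\mathbf{c}$ and let $S$ be the set of rows where $\mathbf{c}=1$; by property (1) of Definition \ref{matrix} we have $|S|=w$. For each $j\in[1,k-1]$, let $T_j\subseteq S$ be the set of rows where both $\mathbf{c}$ and $\mathbf{c}_j$ carry a $1$. By property (2) of Definition \ref{matrix}, $|T_j|\leq \lambda$. The rows we want are exactly those of $S\setminus\bigcup_{j=1}^{k-1} T_j$. By the union bound,
\begin{equation*}
\Big| S\setminus \bigcup_{j=1}^{k-1} T_j \Big| \geq w - (k-1)\lambda.
\end{equation*}
With $\lambda=\lfloor (w-1)/(k-1)\rfloor$ we get $(k-1)\lambda \leq w-1$, hence $w-(k-1)\lambda \geq 1$, which guarantees the existence of at least one such row.

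There is essentially no obstacle here; the only point that required care was choosing $\lambda$ so that $(k-1)\lambda < w$, which is precisely what the floor expression $\lfloor (w-1)/(k-1)\rfloor$ ensures. Combining the three pieces — weight $w$ per column, runlength-$d$ separation per column (both from Definition \ref{matrix}(1)), and the $k$-tuple separating property just derived — shows that $M$ meets all the requirements of a $(k,n,d,w)$-superimposed code of length $t$.
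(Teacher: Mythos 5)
Your argument is correct and is essentially the same as the paper's: the paper's one-line proof observes that the bitwise OR of any $k-1$ other columns covers at most $(k-1)\lambda = (k-1)\lfloor (w-1)/(k-1)\rfloor < w$ of the $w$ rows where $\mathbf{c}$ has a $1$, which is exactly your counting bound $|S\setminus\bigcup_j T_j|\geq w-(k-1)\lambda\geq 1$. No meaningful difference in approach.
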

\begin{proof}
The bitwise OR of any set $C$ of $k-1$ columns of $M$ can have at most
$(k-1)\lambda=(k-1)\left\lfloor (w-1)/(k-1)\right\rfloor<w$ 
symbols equal to 1 in the same $w$ rows where an arbitrary column 
$\mathbf{c}\notin C$ has a $1$.
\end{proof}

\remove{
\begin{proof}
Let $\mathbf{c}$ be an arbitrary column of $M$ and let $A$ be the set of row indices in which column $\mathbf{c}$ has a $1$. Therefore $|A| = w$. Let $K$ be a set of arbitrary $k-1$ columns of $M$ where $\mathbf{c} \not \in K$. Let us denote by $M(A,K)$ the $w \times (k-1)$ submatrix of $M$ constructed by first selecting the $k-1$ columns in $K$ and then selecting the $w$ rows whose indices belong to $A$. Since $M$ is a $(n,d,w,\lambda)$-matrix we have that the number of $1$'s that column $\mathbf{c}$ share (in the same rows) with any column in $K$ is at most $\lambda = \left\lfloor \frac{w-1}{k-1} \right\rfloor$. Hence, the total number of $1$'s in $M(A,K)$ is at most $$(k-1) \lambda \leq (k-1) \left\lfloor \frac{w-1}{k-1} \right\rfloor \leq w-1\,.$$ Considering that the matrix $M(A,K)$ has 
$w$ rows, we have that at least one row in $M(A,K)$ contains only zero elements. 
Then the lemma follows.
\end{proof}
}

We now  show how to efficiently construct binary $(n,d,w,\lambda)$-matrices with a small number of rows. This fact, by virtue of Lemma \ref{LaToSup}, will 
give us an upper bound on the minimum length of $(k,n,d,w)$-superimposed codes.

We need the following enumerative lemma from \cite{DDV}. We include
here the 
short proof
to keep  the paper self-contained. Since $(w,d)$-vectors of 
length $t$ have necessarily $t \geq (w-1)d + w$, we use this inequality
throughout the paper.

\begin{lem}\label{lem:enumerative}
Let $V \subseteq \{0,1\}^t$ be the set of all binary $(w, d)$-vectors
of length $t$. Then
$$
	|V| = \binom{t - (w-1) d}{w}\,.
$$
\end{lem}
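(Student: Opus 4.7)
The plan is to set up a bijection between binary $(w,d)$-vectors of length $t$ and $w$-element subsets of $[1, t-(w-1)d]$, and then invoke the standard count $\binom{t-(w-1)d}{w}$.

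First, I would encode each $(w,d)$-vector $\mathbf{x} \in V$ by the positions of its ones. That is, associate to $\mathbf{x}$ the tuple $(p_1, p_2, \ldots, p_w)$ where $1 \leq p_1 < p_2 < \cdots < p_w \leq t$ are the indices of the $1$-entries. The runlength constraint (any two consecutive $1$'s separated by at least $d$ zeros) translates exactly to $p_{i+1} - p_i \geq d+1$ for each $i = 1, \ldots, w-1$. Conversely, every such strictly increasing tuple satisfying this gap condition corresponds to a unique $(w,d)$-vector, so $|V|$ equals the number of such tuples.

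Second, I would collapse the gaps by the standard substitution $q_i := p_i - (i-1)d$. A direct check shows $q_{i+1} - q_i = p_{i+1} - p_i - d \geq 1$, so the map $(p_1, \ldots, p_w) \mapsto (q_1, \ldots, q_w)$ is a bijection between admissible tuples and strictly increasing tuples $1 \leq q_1 < q_2 < \cdots < q_w \leq t - (w-1)d$. The inverse $p_i = q_i + (i-1)d$ recovers a valid gap-respecting tuple, and the ranges match because $q_w = p_w - (w-1)d \leq t - (w-1)d$ while $q_1 = p_1 \geq 1$.

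Finally, the number of such $q$-tuples is simply the number of $w$-subsets of an $(t-(w-1)d)$-element set, which equals $\binom{t-(w-1)d}{w}$, yielding the claim. There is no real obstacle here; the only minor point to verify explicitly is that the ambient set $[1, t-(w-1)d]$ is nonempty under the hypothesis $t \geq (w-1)d + w$ already noted in the excerpt, so the binomial coefficient is well-defined.
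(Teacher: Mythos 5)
Your proof is correct and is essentially the same as the paper's: the index substitution $q_i = p_i - (i-1)d$ is exactly the paper's bijection of inserting/removing a run of $d$ zeros between consecutive $1$'s, just phrased via coordinates of the $1$-entries rather than as an operation on the vectors. Your version spells out the verification slightly more explicitly, but the underlying argument is identical.
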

\begin{proof}
Let $A$ be the set of all distinct binary vectors of length $t-(w-1)d$ and weight $w$.
One can see that $|V| = |A|$ since each vector  of $V$ can be obtained from an element $a \in A$ by inserting a run of  exactly $d$ 0's
between each pair of $1$'s in $a$.  Conversely,  each element of $A$ can be obtained from an element $s \in V$ by removing exactly $d$ consecutive 0's
in 
between each pair of consecutive $1$'s in $s$. 
\end{proof}

We are ready to state one of the main results of this paper.

\begin{thm}\label{th:LLPairs}
There exists a $t \times n$ $(n,d,w,\lambda)$-matrix with
\begin{equation}\label{eq:boundLL}
t = \left \lceil  (w-1)d + \frac{\lambda}{2} + \frac{ew}{\lambda+1}\left(w-\frac{\lambda}{2} \right) (e(2n-4))^{\frac{1}{\lambda+1}}
\right \rceil\,.
\end{equation}
\end{thm}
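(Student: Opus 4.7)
The plan is to apply the Moser--Tardos algorithmic version of the Lov\'asz Local Lemma (Lemma~\ref{lem:LLL}) to a uniform random construction. I would independently sample each of the $n$ columns uniformly from the set $V$ of binary $(w,d)$-vectors of length $t$; by Lemma~\ref{lem:enumerative}, $|V|=\binom{t'}{w}$ where $t'=t-(w-1)d$. Property~(1) of Definition~\ref{matrix} then holds automatically, so only (2) has to be enforced.

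The natural bad events are $E_{ij}$ (for $1\le i<j\le n$): columns $i$ and $j$ share more than $\lambda$ ones in the same rows. Each $E_{ij}$ depends only on the two random columns $X_i,X_j$, so $E_{ij}$ and $E_{k\ell}$ can be dependent only when $\{i,j\}\cap\{k,\ell\}\neq\emptyset$. Hence every bad event is dependent on at most $D=2(n-2)=2n-4$ others.

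The core step is to bound $P=\Pr(E_{ij})$. I would condition on column $i$ (fixing its one-set $P_i$, $|P_i|=w$) and apply a union bound over the $\binom{w}{\lambda+1}$ subsets $S\subseteq P_i$ of size $\lambda+1$ to obtain $P\le \binom{w}{\lambda+1}\max_S\Pr(S\subseteq P_j)$. For a fixed such $S$ (automatically $d$-separated, since $S\subseteq P_i$), the key counting estimate is that at most $\binom{t'-\lambda-1}{w-\lambda-1}$ binary $(w,d)$-vectors of length $t$ contain $S$, whence $\Pr(S\subseteq P_j)\le \binom{t'-\lambda-1}{w-\lambda-1}/\binom{t'}{w}$. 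Identity~\eqref{eq:idBinom} rewrites this ratio as $\binom{w}{\lambda+1}/\binom{t'}{\lambda+1}$; combining with~\eqref{eq:ULBinom} applied to one remaining $\binom{w}{\lambda+1}$ and Lemma~\ref{cor:boundBinomial} applied to the ratio then yields $P\le \bigl(ew/(\lambda+1)\bigr)^{\lambda+1}\bigl((w-\lambda/2)/(t'-\lambda/2)\bigr)^{\lambda+1}$.

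Finally, the LLL hypothesis $ePD\leq 1$ with $D=2n-4$ rearranges (by extracting a $(\lambda+1)$-th root and isolating $t'$) into exactly $t\geq (w-1)d+\lambda/2+\frac{ew}{\lambda+1}(w-\lambda/2)(e(2n-4))^{1/(\lambda+1)}$, so Lemma~\ref{lem:LLL} delivers an $(n,d,w,\lambda)$-matrix whenever $t$ is at least the ceiling of this quantity. The main technical obstacle I expect is the enumerative claim that at most $\binom{t'-\lambda-1}{w-\lambda-1}$ binary $(w,d)$-vectors contain a prescribed $d$-separated $(\lambda+1)$-subset: the runlength constraint breaks the clean bijection of Lemma~\ref{lem:enumerative} (the positions corresponding to $S$ in the image depend on the ranks of the elements of $S$ within the chosen vector), so one must argue either via a tailored injection into $(w-\lambda-1)$-subsets of $[1,t'-\lambda-1]$, or by summing per-interval enumerations from Lemma~\ref{lem:enumerative} across the $\lambda+2$ gaps induced by $S$ and recognising the resulting sum as a Vandermonde-type expression.
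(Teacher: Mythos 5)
Your proposal follows the paper's proof essentially step for step: the same uniform sampling of columns from the set of binary $(w,d)$-vectors, the same bad events $E_{i,j}$ with dependency degree $D=2n-4$, the same union bound over the $\binom{w}{\lambda+1}$ subsets $S$ of the conditioned column's support, the same counting bound $\binom{t-(w-1)d-(\lambda+1)}{w-(\lambda+1)}\big/\binom{t-(w-1)d}{w}$, and the same chain of estimates via \eqref{eq:idBinom}, Lemma~\ref{cor:boundBinomial} and \eqref{eq:ULBinom} before solving $ePD\le 1$ for $t$. The enumerative claim you flag as the main obstacle is exactly what the paper settles by the ``tailored injection'' you anticipate (remove $d$ zeros between each pair of consecutive ones and delete the coordinates of $S$, landing injectively in the weight-$(w-\lambda-1)$ vectors of length $t-(w-1)d-(\lambda+1)$), so there is no substantive difference in approach.
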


\begin{proof}
Let $M$ be a $t \times n$ binary matrix, $t \geq (w-1)d+w$, where each column $\bm c$ is sampled uniformly at random among the set of all distinct binary $(w, d)$-vectors of length $t$. Since we are assuming that $t \geq (w-1)d+w$, by Lemma \ref{lem:enumerative} we have that
$$\Pr(\mathbf{c}) = \binom{t - (w-1) d}{w}^{-1}\,.$$
Let $i,j\in [1, n], i\neq j$ and let us consider the event $\overline{E}_{i,j}$ that there exists \emph{at most} $\lambda$ rows  
such that both the $i$-th column and the $j$-th column of $M$ have the symbol $1$ in \emph{each} of these rows.
We  evaluate the probability of the complementary ``\emph{bad}'' event ${E}_{i,j}$. Hence $E_{i,j}$ is the event that the
random $i$-th and $j$-th columns $\bc_i$ and $\bc_j$ have $1$ in at least $\lambda+1$ coordinates. We bound $\Pr(E_{i,j})$ by conditioning on the event that $\bc_i$ is equal to a $c$, where $c$ is a binary $(w,d)$-vector.

For a subset $S\subset [1,t]$ of coordinates, let $E_{i,j}^S$ be the event that in each coordinate of $S$ the $i$-th and $j$-th column have the symbol $1$. 
For a fixed column $\bc_i=c$, let $A$ 
be the set of coordinates where  $c$ has $1$'s.
Note that for $S\in\binom{A}{\lambda+1}$, i.e., for a  subset $S$ of $A$ of size $\lambda +1$, it holds that
\begin{equation}\label{eq:upperProbS}
\Pr(E_{i,j}^S|\bc_i = c) \leq \frac{\binom{t-(w-1)d-(\lambda+1)}{w-(\lambda+1)}}{\binom{t - (w-1) d}{w}}\,.
\end{equation}
We justify \eqref{eq:upperProbS}. 
Since by assumption  $c$ already contains a $1$ in each coordinate of $S\subset A$, given that $\bc_i=c$ we have  the event $E_{i,j}^S$ conditionally reduces to the event that $\bc_j$ also has $1$'s in all the $\lambda+1$ coordinates in $S$. Therefore,  we only need to upper bound the number of $(w,d)$-vectors of length $t$ with $1$'s in the $\lambda+1$ coordinates of $S$. Note that each such $t$-long vector, upon removing exactly $d$ $0$'s in between each pair of consecutive $1$'s and the coordinates in $S$, reduces to a distinct binary vector of length $t-(w-1)d-(\lambda+1)$ and weight $w - (\lambda+1)$. It follows that the number of choices for $\bc_j$ (such that it 
has $1$'s in all the $\lambda+1$ coordinates in $S$) is upper bounded by $\binom{t-(w-1)d-(\lambda+1)}{w - (\lambda+1)}$. Then, 
formula \eqref{eq:upperProbS} holds.

\remove{
Given a choice of $\bm c_i = c$, we determine an upper bound on the probability of the event $E_{i,j}^S$ given $\bm c_i = c$. Since $\bm c_i$ and $\bm c_j$ have $1$ in all  coordinates in $S$, the remaining $w-(\lambda+1)$ $1$'s of $\bm c_j$ can be put only in some of the $t-w$ coordinates of $\bm c_i$. However, $\bm c_j$ has at least $(w-1)d$ $0$'s, and at most $w-(\lambda+1)$ of them can be in the same coordinates where $\bm c_i$ has a $1$. Therefore, among the $t-w$ where the $1$'s of $\bm c_j$ could possibly put, there are at least $(w-1)d - (w - (\lambda+1))$ coordinates that must be necessarily occupied by the $0$'s of $\bm c_j$. Wrapping all together, the number of positions in which we can put the $w-(\lambda+1)$ $1$'s of $\bm c_j$ is no more than $t-w-((w-1)d - (w - (\lambda+1))) = t-(w-1)d-(\lambda+1)$.
}
Therefore, it holds that
\begin{align}
\Pr(E_{i,j}|\bc_i = c) & \leq \sum_{S\in \binom{A}{\lambda+1}}\Pr(E_{i,j}^S|\bc_i = c) \nonumber \\& = \binom{w}{\lambda + 1}\frac{\binom{t-(w-1)d-(\lambda+1)}{w-(\lambda+1)}}{\binom{t - (w-1) d}{w}}.\label{eq: upperEijS}
\end{align}
Since the right-hand side of \eqref{eq: upperEijS} does not depend on the fixed column $c$, it also holds unconditionally. Hence
\begin{equation}\label{eq:bounPrS}
\Pr(E_{i,j}) \leq \binom{w}{\lambda + 1}\frac{\binom{t-(w-1)d-(\lambda+1)}{w-(\lambda+1)}}{\binom{t - (w-1) d}{w}}.
\end{equation}
Hence, by \eqref{eq:bounPrS} we have
\begin{align}
\Pr({E}_{i,j}) &\leq \binom{w}{\lambda + 1}\binom{t-(w-1)d-(\lambda+1)}{w-(\lambda+1)} \Big/ \binom{t - (w-1) d}{w} \nonumber \\
&\stackrel{(i)}{=} \binom{w}{\lambda + 1} \binom{w}{\lambda + 1} \Big/ \binom{t - (w-1) d}{\lambda + 1} \nonumber \\
&\stackrel{(ii)}{\leq} \binom{w}{\lambda + 1} \left( \frac{w - \frac{\lambda}{2}}{t- (w-1) d - \frac{\lambda}{2}} \right)^{\lambda + 1} \nonumber \\
&\stackrel{(iii)}{\leq} \left(\frac{ew}{\lambda + 1} \right)^{\lambda+1} \left( \frac{w - \frac{\lambda}{2}}{t - (w-1) d - \frac{\lambda}{2}} \right)^{\lambda + 1} =P\,,
\label{Q-twoS}
\end{align}
where $(i)$ holds due to equality \eqref{eq:idBinom} (since 
$t \geq (w-1)d + w$), $(ii)$ is true due to Lemma \ref{cor:boundBinomial}, and finally $(iii)$ holds thanks to inequalities~\eqref{eq:ULBinom}.

The number of events ${E}_{i,j}$ is equal to $n(n-1)/2$.
Let us fix an event $E_{i,j}$. Then the number
of events $E_{i',j'}$ with  $\{i, j\} \cap \{i', j'\} \neq \emptyset$ and $\{i,j\} \neq \{i',j'\}$  is equal to $D=2n-4$. Hence, according to Lemma \ref{lem:LLL}, if we take $\mathcal{P}=\{\bc_1, \bc_2, \ldots, \bc_n\}$ to be the set of $n$ mutually independent random variables that represent the columns of the matrix $M$, $\mathcal{E} = \{E_{i,j}\}$ to be the set of events defined earlier that are associated to these random variables (where each $E_{i,j}$ is only determined by $\bc_i$ and $\bc_j$),  $P$  (as defined in \eqref{Q-twoS}) and $D=2n-4$ that satisfies $e P D\leq 1$, then the probability that {\em none} of the ``bad"  events $E_{i,j}$ occurs is strictly  positive. By solving the following inequality for $t$
$$e P D=  e(2n-4) \left (\frac{ew}{\lambda+1} \right )^{\lambda+1} \left (\frac{w-\frac{\lambda}{2}}{t-(w-1)d-\frac{\lambda}{2}} \right )^{\lambda+1} < 1\,,$$
one can see that by setting  $t$ as in \eqref{eq:boundLL} we are indeed satisfying this inequality.
We also note that the initial condition $t \geq (w-1)d+w$ is satisfied for this value of $t$, as given in \eqref{eq:boundLL}.

Hence, from Lemma \ref{lem:LLL}  one can construct a binary $(n,d,w,\lambda)$-matrix $M$ whose number of rows $t$ satisfies equality~\eqref{eq:boundLL}.
\end{proof}

Now, thanks to Lemma \ref{LaToSup} and Theorem \ref{th:LLPairs}, we can prove the 
following result.

\begin{thm}\label{thm:algo}
   
There exists a randomized algorithm to construct  a $(k,n,d,w)$-superimposed code with length
    \begin{equation}
    t \leq 1 + (w-1)d + \frac{w-1}{2(k-1)} + \\ \frac{ew(k-1)}{w-1}\left(w-\frac{w-1}{2(k-1)} +\frac{1}{2} \right) (e(2n-4))^{\frac{k-1}{w-1}}. \label{eq:upperWstrong}
\end{equation}
The algorithm requires, on average, time $O(t n^2)$ to construct the code.
\end{thm}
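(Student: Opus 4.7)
The plan is to apply Theorem~\ref{th:LLPairs} with $\lambda = \lfloor (w-1)/(k-1) \rfloor$ and then invoke Lemma~\ref{LaToSup}: the resulting $t \times n$ binary $(n, d, w, \lambda)$-matrix is automatically a $(k,n,d,w)$-superimposed code for this choice of $\lambda$. The algorithm itself is the Moser--Tardos procedure (Algorithm~\ref{algo0}) used in the proof of Theorem~\ref{th:LLPairs}: sample each of the $n$ columns uniformly from the set of binary $(w,d)$-vectors of length $t$, and while some pair $(\bc_i, \bc_j)$ shares more than $\lambda$ ones, resample both $\bc_i$ and $\bc_j$.

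To derive the length bound~\eqref{eq:upperWstrong}, I would substitute $\lambda = \lfloor (w-1)/(k-1) \rfloor$ into~\eqref{eq:boundLL} and simplify using three elementary inequalities: $\lambda + 1 \geq (w-1)/(k-1)$, which controls both the factor $ew/(\lambda+1)$ and the exponent $1/(\lambda+1)$; $\lambda/2 \leq (w-1)/(2(k-1))$ for the additive term; and $\lambda \geq (w-1)/(k-1) - 1$, which yields $w - \lambda/2 \leq w - (w-1)/(2(k-1)) + 1/2$. Adding a $+1$ to absorb the outer ceiling in~\eqref{eq:boundLL} then delivers precisely~\eqref{eq:upperWstrong}.

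The subtler point is the $O(t n^2)$ average runtime. The initial sampling of the $n$ columns, plus one full scan of the $\binom{n}{2}$ pair-events to populate a list of currently-violated events, costs $O(t n^2)$. After each resample of $\bc_i$ and $\bc_j$, the only events whose status can change are the $O(n)$ events $E_{i,j'}$ and $E_{j,j'}$ involving one of these two columns, so each iteration of the while loop can be carried out in $O(n t)$ time by updating only this local neighbourhood in the violated-events list. By Lemma~\ref{lem:LLL} the expected number of resamplings is $m/D = \binom{n}{2}/(2n-4) = O(n)$, giving total expected time $O(t n^2) + O(n)\cdot O(n t) = O(t n^2)$.

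The main obstacle I anticipate is the runtime analysis rather than the length calculation, which is mechanical: a naive implementation of the MT loop that rescans all $\Theta(n^2)$ pair-events after each resample would inflate the cost by an extra factor of $n$, so one must exploit the locality of each resample (only $O(n)$ pair-events touch the two columns just updated) and amortise against the $O(n)$ expected resamplings furnished by Lemma~\ref{lem:LLL}.
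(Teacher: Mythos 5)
Your proposal is correct and follows essentially the same route as the paper: substitute $\lambda = \lfloor (w-1)/(k-1)\rfloor$ into Theorem~\ref{th:LLPairs} via Lemma~\ref{LaToSup}, bound the floor from both sides with $\frac{w-1}{k-1}-1 \leq \lambda \leq \frac{w-1}{k-1}$, and charge each Moser--Tardos resampling only for the $O(n)$ pair-events touching the two resampled columns, with $m/D = O(n)$ expected resamplings. The one detail you gloss over, which the paper makes explicit, is that sampling a column uniformly from the set of $(w,d)$-vectors in $O(t)$ time uses the bijection of Lemma~\ref{lem:enumerative} together with Cover's enumerative encoding.
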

\begin{proof}
The upper bound (\ref{eq:upperWstrong}) on $t$  is derived by substituting the value of $\lambda = \lfloor (w-1)/(k-1) \rfloor$ from Lemma \ref{LaToSup} into equation \eqref{eq:boundLL} of Theorem \ref{th:LLPairs}, and by using the inequalities $\frac{w-1}{k-1}-1 \leq \left\lfloor \frac{w-1}{k-1} \right\rfloor \leq \frac{w-1}{k-1}$.
The time complexity $O(t n^2)$ comes from Lemma \ref{lem:LLL} by first noticing that $m/D = n(n-1)/(4n-8) \leq n/3$, for $n\geq 5$. Moreover,  {\rm \bf  Algorithm \ref{algo0}}   requires to randomly generate a matrix, checking if the $\Theta(n^2)$ events $\overline{E}_{i,j}$ are satisfied, and resampling  \emph{only on non-satisfied} events. 
The generation of each matrix-column can be done 
by first generating an integer in the interval $[0, \binom{t - (w-1) d}{w}-1]$
uniformly at random,
and encoding it with a different  
binary vector of length $t-(w-1)d$ containing  $w$ 1's. This 
one-to-one encoding can be
performed in time $O(t)$ for each column, 
by using the enumeration encoding technique by Cover \cite{Cover}.
Successively, one inserts a run of exactly $d$ $0$'s between each pair of $1$'s
so that each matrix-column is a $(w,d)$-vector. All together, this 
requires $O(t)$ operations per column.

In order to check whether an arbitrary event $\overline{E}_{i,j}$ is satisfied, 
we need to check whether the $i$-th column and the $j$-th column of the matrix have at most $\left\lfloor\frac{w-1}{k-1}\right\rfloor$ $1$'s in common; this  can be done with at most $O(t)$ operations. Successively, we resample only over non-satisfied events. Then, we need to check only the events that involve columns that have been resampled. Altogether, by Lemma \ref{lem:LLL} this procedure requires  
$O(tn^2 + n \cdot m/D \cdot t ) = O(t n^2)$ elementary operations.
\end{proof}

Now, we optimize the parameter $w$ in equation \eqref{eq:upperWstrong} to 
obtain a randomized algorithm for (weight-unconstrained) $(k,n,d)$-sumperimposed codes.

\begin{thm}\label{thm:algosuperimposed}
There exists a randomized algorithm to construct a $(k,n,d)$-super\-\-imposed code with length
    \begin{equation*}
  t \leq d(k-1) \ln(2en) +  \frac{\ln(n)}{2} +  e^2 (k-1)^2 \ln(2en) + \frac{7e^2(k-1)}{2} + d+O(1).
\end{equation*}
    The algorithm requires, on average, time $O(t n^2)$ to construct the code.
\end{thm}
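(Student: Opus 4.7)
I would derive this statement as a direct corollary of Theorem~\ref{thm:algo} by choosing the weight parameter $w$ that approximately minimizes the right-hand side of~\eqref{eq:upperWstrong}, and then estimating each resulting summand. Note that the algorithm itself and its $O(tn^2)$ running time are inherited verbatim from Theorem~\ref{thm:algo} once $w$ is fixed in advance; only the upper bound on $t$ requires work.

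The guiding observation is the tension between the two $w$-dependent contributions in~\eqref{eq:upperWstrong}: the term $(w-1)d$ is linear in $w$, whereas the factor $(e(2n-4))^{(k-1)/(w-1)}$ decays once the exponent $(k-1)/(w-1)$ drops below $1/\ln n$. The right balance is to kill the exponential by setting the exponent close to $1/\ln(2en)$. Concretely, I would take $w = \lceil (k-1)\ln(2en)\rceil + 1$, so that $w-1 \geq (k-1)\ln(2en)$ and consequently
\[
(e(2n-4))^{(k-1)/(w-1)} \;\leq\; (e(2n-4))^{1/\ln(2en)} \;\leq\; e,
\]
where the last step follows from $\ln(e(2n-4))\leq \ln(2en)$ for all $n\geq 3$.

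I would then substitute this choice into~\eqref{eq:upperWstrong} and process the four summands. The first term $(w-1)d$ yields $d(k-1)\ln(2en) + d + O(1)$, the extra $d$ absorbing the ceiling. The second term $\frac{w-1}{2(k-1)}$ becomes $\frac{\ln(2en)}{2} = \frac{\ln n}{2} + O(1)$. For the last term I would use $w/(w-1) \leq 1 + 1/(w-1)$ together with the expansion $w - \frac{w-1}{2(k-1)} + \frac{1}{2} = (k-1)\ln(2en)\bigl(1 - \frac{1}{2(k-1)}\bigr) + O(1)$, so that multiplying by the factor of $e$ obtained above produces a leading $e^2(k-1)^2 \ln(2en)$ together with several $O(k)$ corrections. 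Collecting them gives the constant $\frac{7e^2(k-1)}{2}$ in the statement. Summing all four contributions yields the claimed inequality.

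The hard part will be the bookkeeping of lower-order terms in the last summand: contributions of size $O(k\ln n)$ appear with opposite signs, one from the $-\frac{w-1}{2(k-1)}$ correction inside the parenthesis and one from the $w/(w-1) = 1 + \Theta(1/(w-1))$ expansion. I need to verify that these partially cancel so that only a clean $O(k)$ residual survives, rather than an $O(k\ln n)$ slack that would ruin the stated bound. Assuming this cancellation comes out as expected, the precise constant $\frac{7e^2(k-1)}{2}$ and the additive $d + O(1)$ follow by direct arithmetic.
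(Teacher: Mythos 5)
Your proposal is correct and follows essentially the same route as the paper: the paper also sets $w=\lceil 1+(k-1)\ln(2en)\rceil$ (identical to your choice, since $\lceil x\rceil+1=\lceil x+1\rceil$), uses $(2en)^{1/\ln(2en)}=e$ to kill the exponential factor, and then expands the four summands of \eqref{eq:upperWstrong}. The bookkeeping you flag as the remaining risk does work out: the only $O(k\ln n)$ cross term beyond the leading $e^2(k-1)^2\ln(2en)$ is the negative one coming from $-\tfrac{w-1}{2(k-1)}$, which the paper simply discards via $-\tfrac{\ln(2en)}{2}+\tfrac{5}{2}\leq\tfrac{3}{2}$, leaving exactly the $\tfrac{7e^2(k-1)}{2}$ residual.
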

\begin{proof}
Let $w = \lceil 1 + (k-1) \ln (2en) \rceil$. The algorithm described in Theorem \ref{thm:algo} constructs a $(k,n,d,w)$-superimposed code which is, clearly, a $(k,n,d)$-superimposed code.

    Using the  inequalities
    \begin{equation*}
       1 + (k-1) \ln (2en) \leq \lceil 1 + (k-1) \ln (2en) \rceil \leq 2 + (k-1) \ln (2en)
    \end{equation*}
we get
        \begin{equation*}
       \ln (2en) \leq \frac{w-1}{k-1}  \leq \frac{1 + (k-1) \ln (2en)}{k-1}.
    \end{equation*}
    Therefore,   by (\ref{eq:upperWstrong}) of Theorem \ref{thm:algo} we have that
    
    \begin{align*} 
         t &\leq 1 + ((k-1) \ln(2en) +1) d + \frac{1 + (k-1) \ln(2en) }{2(k-1)} + \frac{e}{\ln(2en)}\ \cdot \\  
        & \qquad \qquad \cdot (2 + (k-1) \ln(2en)) \left(2+(k-1) \ln(2en) - \frac{\ln(2en)}{2}+\frac{1}{2}  \right) (2en)^{\frac{1}{\ln(2en)}}\\ 
        &\stackrel{(i)}{\leq} 1 + d(k-1) \ln(2en) +d+ \frac{1}{2(k-1)} + \frac{\ln(2en)}{2} + \frac{e}{\ln(2en)}\ \cdot \\ 
        & \qquad \qquad \cdot (2 + (k-1) \ln(2en)) \left((k-1) \ln(2en) +\frac{3}{2}  \right) (2en)^{\frac{1}{\ln(2en)}}\\ 
        &\stackrel{(ii)}{=} 1 + d(k-1) \ln(2en) + d+\frac{1}{2(k-1)} + \frac{\ln(2en)}{2} + \frac{e^2}{\ln(2en)}\ \cdot \\  
        & \qquad \qquad \cdot \left (3 + \frac{7(k-1) \ln(2en))}{2} +  (k-1)^2 (\ln(2en))^2\right) \\ 
        &= 1 + d(k-1) \ln(2en) + d+\frac{1}{2(k-1)} + \frac{\ln(2en)}{2} + \frac{3e^2}{\ln(2en)} + \\ 
        & \qquad \qquad + e^2 (k-1)^2 \ln(2en) + \frac{7e^2(k-1)}{2} \\
        &\stackrel{(iii)}{\leq} d(k-1) \ln(2en) +  \frac{\ln(n)}{2} + e^2 (k-1)^2 \ln(2en) + \frac{7e^2(k-1)}{2} +d+ O(1)\,,
    \end{align*}
    where $(i)$ holds due to the fact that $\ln(2en) \geq 2$ for $n \geq 2$, $(ii)$ holds  since  $(2en)^{\frac{1}{\ln(2en)}} = e$, and $(iii)$ is 
    since $k \geq 2$ and $\ln(2en) \geq 2$.
\end{proof}

We notice  that a 
widely believed conjecture of Erd\H{o}s, Frankl and F\"uredi \cite{Erdos} says that for $k \geq \sqrt{n}$   one has that minimum-length
$(k,n,0)$-superimposed codes (i.e., classical superimposed codes)
have length $t$ equal to $n$.  The current best-known result has been proved in \cite{Shann} which shows  that if $k \geq 1.157 \sqrt{n}$ then the minimum
length of $(k,n,0)$-superimposed codes is equal to $n$. 
This last result clearly holds also for arbitrary $(k,n,d)$-superimposed codes.
We also recall the following result obtained in \cite{Olgica}.

\begin{rem}[\cite{Olgica}] \label{rem:NonExistence}
Every $(k, n,d)$-superimposed codes of length $t$ must satisfy
$$
   t \geq \min \left\{ n,  1 + (k-1)(d+1)\right\}	\,.
$$
This implies that if $k \geq \frac{n-1}{d+1} + 1$ then $t = n$, so we cannot construct a $(k,n,d)$-superimposed code of length $t$ that is better than the identity matrix of size $n \times n$.
\end{rem}

To properly appraise the value of   Theorem \ref{thm:algosuperimposed}, 
we recall the following result presented in \cite{Olgica} that provides a
lower bound on the minimum length of 
 any $(k,n,d)$-superimposed codes.

\begin{thm}[\cite{Olgica}]\label{thm:lb}
Given positive integers $k$ and $n$, with $2\leq k\leq \min\{1.157 \sqrt{n},$ $\frac{n-1}{d+1}+1\}$, the length $t$ of any $(k,n,d)$-superimposed code satisfies
\begin{equation*}
t \geq \Omega\left(\frac{k d}{\log (k d)}\log n + \frac{k^2}{\log k} \log n\right).
\end{equation*}
\end{thm}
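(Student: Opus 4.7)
The plan is to prove the two additive terms separately, since they arise from two different sources of complexity in the definition: the superimposed property (giving the $k^2 \log n / \log k$ term) and the runlength constraint interacting with the superimposed property (giving the $kd \log n / \log(kd)$ term). This also matches the natural structure: the first term is already optimal for $d=0$, and the second is what the extra runlength constraint forces beyond that.

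For the $\frac{k^2}{\log k} \log n$ term, I would simply invoke the classical lower bound of Ruszink\'o and F\"uredi for $(k,n,0)$-superimposed codes (cover-free families). Indeed, property (2) of Definition \ref{def:dsuper} does not involve the parameter $d$ at all, so any $(k,n,d)$-superimposed code is, in particular, a $(k,n,0)$-superimposed code of the same length. Hence the classical bound applies verbatim. The constraint $k \leq 1.157\sqrt{n}$ in the theorem statement ensures we are in the regime where this lower bound is not dominated by the trivial bound $t\geq n$ from the Frankl--F\"uredi/Shangguan argument.

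For the $\frac{kd}{\log(kd)} \log n$ term, the plan is to adapt the standard counting argument used for the classical bound, but to feed in the runlength restriction on column weights. Concretely, let $M$ be a $(k,n,d)$-superimposed code and let $S_r$ be the set of columns with a $1$ in row $r$. The defining property guarantees that for each column $\mathbf{c}_i$ and each $(k-1)$-subset $K$ of the remaining columns there exists a row $r(i,K)$ where only $\mathbf{c}_i$ is $1$ among $\{\mathbf{c}_i\}\cup K$, which yields the double-counting inequality
\[
\sum_{r=1}^{t}|S_r|\binom{n-|S_r|}{k-1}\;\geq\; n\binom{n-1}{k-1}.
\]
The runlength constraint forces every column to have weight at most $w_{\max}\leq \lceil(t+d)/(d+1)\rceil$, so $\sum_r |S_r|\leq n\,w_{\max}$. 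Inserting this into the above inequality (after bounding $\binom{n-|S_r|}{k-1}$ by its maximum over a convex combination via Jensen's inequality, and using the estimates in \eqref{eq:ULBinom}) gives a relation between $t$, $d$, $k$ and $n$ from which the $\frac{kd}{\log(kd)}\log n$ term can be extracted; the factor $\log(kd)$ arises, as is typical, from taking logarithms of a Stirling-type bound applied at the optimal weight.

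The main obstacle will be to execute the optimization in the second step cleanly. Once one substitutes the weight bound $w_{\max}\asymp t/(d+1)$ into the double-counting inequality, the resulting transcendental inequality for $t$ needs to be inverted carefully so as to produce the precise asymptotic $\Omega\!\left(kd\log n/\log(kd)\right)$ rather than a weaker polynomial-style bound. Combining the two partial lower bounds additively then requires only observing that two individually proved lower bounds of the form $t\geq A$ and $t\geq B$ imply $t\geq\tfrac12(A+B)=\Omega(A+B)$, which completes the proof within the regime $2\leq k\leq \min\{1.157\sqrt n,(n-1)/(d+1)+1\}$ required by the statement.
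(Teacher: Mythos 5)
A remark on scope first: the paper itself contains \emph{no} proof of this statement. Theorem \ref{thm:lb} is quoted verbatim from \cite{Olgica} (it is essentially the lower bound (\ref{eq:lwo}) restated with the range of $k$ made explicit), so your proposal cannot be compared to an internal argument and must be judged on its own. Your first half is fine: property 2) of Definition \ref{def:dsuper} does not involve $d$, so every $(k,n,d)$-superimposed code is a $(k,n,0)$-superimposed code and the classical D'yachkov--Rykov/Ruszink\'o/F\"uredi bound gives the $\Omega\bigl(\frac{k^2}{\log k}\log n\bigr)$ term.

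The genuine gap is in the second half. The two facts you propose to combine, namely
\[
\sum_{r=1}^{t}|S_r|\binom{n-|S_r|}{k-1}\;\geq\; n\binom{n-1}{k-1}
\qquad\text{and}\qquad
\sum_{r=1}^{t}|S_r|\;\leq\; n\left\lceil\frac{t+d}{d+1}\right\rceil,
\]
are both correct, but they are jointly far too weak to imply $t=\Omega\bigl(\frac{kd}{\log(kd)}\log n\bigr)$: they are simultaneously satisfiable with $t=O(k+d)$. Indeed, take $t_1=\lceil ek\rceil$ rows with $|S_r|=\lfloor n/k\rfloor$ and all remaining rows empty. Since $\binom{n-\lfloor n/k\rfloor}{k-1}\approx e^{-1}\binom{n-1}{k-1}$, each nonempty row accounts for roughly $\frac{n}{ek}\binom{n-1}{k-1}$ of the required pairs, so the first inequality holds; the second holds as soon as $t\geq e(d+1)-d$. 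Hence no Jensen/Stirling manipulation of these two inequalities can produce a $\log n$ factor at all, let alone the stated bound --- the ``careful inversion'' you flag as the main obstacle would fail because the inequality being inverted does not contain the needed information. The $\log n$ in the $d$-term must come from elsewhere, most naturally from the fact that the $n$ columns are \emph{distinct} runlength-$d$-constrained words: by Lemma \ref{lem:enumerative} the number of such words of length $t$ is about $(e(d+1))^{t/(d+1)}$, which already forces $t=\Omega\bigl(\frac{d}{\log d}\log n\bigr)$, and boosting this by a factor of $k$ requires a recursive or amortized argument in the style of F\"uredi's and Ruszink\'o's proofs rather than a single global double count. As written, your plan establishes only $t=\Omega\bigl(k+d+\frac{k^2}{\log k}\log n\bigr)$.
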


Therefore, one can see that the construction method provided by our  Theorem \ref{thm:algosuperimposed}, besides being quite efficient, produces codes of almost optimal length.

In \cite{Olgica}, the authors provide the following upper bound on the length of $(k,n,d,w)$-superimposed codes.

\begin{thm}[\cite{Olgica}]\label{thm:OlgicaEtAl}
There exists a $(k,n,d,w)$-superimposed code of length $t$, provided that $t$ satisfies the inequality
\begin{equation*}
n \binom{n-1}{k-1} \left( \frac{w(k-1)}{t - (2d+1)(w-1)} \right)^w < 1\,.
\end{equation*}
\end{thm}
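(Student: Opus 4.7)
The plan is to prove this by the straightforward probabilistic method; the form of the hypothesis (linear in $n\binom{n-1}{k-1}$) is exactly what one obtains from a union bound, so no Lov\'asz Local Lemma is needed here. I would first sample a random $t\times n$ matrix $M$ whose $n$ columns are drawn independently and uniformly from the set of all binary $(w,d)$-vectors of length $t$, a set of size $\binom{t-(w-1)d}{w}$ by Lemma~\ref{lem:enumerative}. The goal is to show that, whenever $t$ satisfies the stated inequality, such an $M$ is a $(k,n,d,w)$-super\-imposed code with strictly positive probability.

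For each column index $i\in[1,n]$ and each $(k-1)$-subset $K\subseteq[1,n]\setminus\{i\}$, let $E_{i,K}$ denote the ``bad'' event that $\supp(\mathbf{c}_i)\subseteq\bigcup_{j\in K}\supp(\mathbf{c}_j)$. A realization of $M$ fails to be a $(k,n,d,w)$-superimposed code precisely when some $E_{i,K}$ occurs, and there are exactly $n\binom{n-1}{k-1}$ such events. To bound $\Pr(E_{i,K})$ uniformly, I would condition on the columns $\{\mathbf{c}_j:j\in K\}$ and set $S=\bigcup_{j\in K}\supp(\mathbf{c}_j)$, so that $|S|\leq w(k-1)$. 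Since any $(w,d)$-vector with support in $S$ is in particular a $w$-subset of $S$, the number of such vectors is at most $\binom{|S|}{w}\leq\binom{w(k-1)}{w}$; dividing by the total count $\binom{t-(w-1)d}{w}$ gives
$$
\Pr(E_{i,K}\mid\{\mathbf{c}_j:j\in K\})\leq\frac{\binom{w(k-1)}{w}}{\binom{t-(w-1)d}{w}}.
$$
The elementary termwise inequality $\binom{a}{w}/\binom{b}{w}\leq (a/b)^{w}$ (valid for $a\leq b$, obtained since $(a-i)/(b-i)\leq a/b$ for each $i\geq 0$), combined with the trivial inequality $(2d+1)(w-1)\geq (w-1)d$, then yields
$$
\Pr(E_{i,K})\leq\left(\frac{w(k-1)}{t-(w-1)d}\right)^{w}\leq\left(\frac{w(k-1)}{t-(2d+1)(w-1)}\right)^{w}.
$$
Multiplying by $n\binom{n-1}{k-1}$ and invoking the hypothesis of the theorem completes the union bound, yielding existence.

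The main technical obstacle is the very first counting step: one has to argue that upper-bounding the number of $(w,d)$-vectors with support in $S$ by the naive $\binom{|S|}{w}$, i.e.\ discarding the runlength constraint on $\mathbf{c}_i$, is good enough to reach the form stated in the theorem. This \emph{is} sufficient precisely because the theorem leaves itself slack: the denominator $t-(2d+1)(w-1)$ is strictly smaller than $t-(w-1)d$ for $d\geq 1$, so the theorem's bound on $\Pr(E_{i,K})$ is in fact \emph{weaker} than what the naive counting plus the termwise binomial inequality already provides. Aside from this calibration (and verifying $t>(2d+1)(w-1)$ from the hypothesis so that the denominator is positive), the argument is entirely routine.
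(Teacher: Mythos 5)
Your proof is correct, but note that the paper does not prove this statement at all: it is quoted from \cite{Olgica} as a benchmark against which Theorem \ref{thm:algosuperimposed} is compared. Your argument --- sampling the columns independently and uniformly from the $\binom{t-(w-1)d}{w}$ binary $(w,d)$-vectors of length $t$ (Lemma \ref{lem:enumerative}), bounding each of the $n\binom{n-1}{k-1}$ covering events by $\binom{w(k-1)}{w}\big/\binom{t-(w-1)d}{w}\le \left(w(k-1)/(t-(w-1)d)\right)^{w}$, and finishing with a union bound --- is sound, and as you observe it actually yields the stronger conclusion with denominator $t-(w-1)d$ in place of $t-(2d+1)(w-1)$; the stated form then follows from $(2d+1)(w-1)\ge (w-1)d$. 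Two small points are worth making explicit rather than parenthetical: (i) the hypothesis forces $0<w(k-1)<t-(2d+1)(w-1)\le t-(w-1)d$, which is needed both for the termwise inequality $\binom{a}{w}/\binom{b}{w}\le (a/b)^{w}$ (valid only for $a\le b$) and for the sample space of $(w,d)$-vectors to be nonempty; (ii) replacing the termwise bound by Lemma \ref{cor:boundBinomial} sharpens the estimate to $\bigl((w(k-1)-\tfrac{w-1}{2})/(t-(w-1)d-\tfrac{w-1}{2})\bigr)^{w}$, which is exactly the refinement the authors exploit elsewhere in the paper. Finally, keep in mind that your union bound ranges over $\Theta(n^{k})$ events, so while it proves existence it does not give an efficient construction; this is precisely why the paper's own constructive results (Theorem \ref{th:LLPairs}) work instead with $\Theta(n^{2})$ pairwise intersection events and the Moser--Tardos algorithm.
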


From Theorem \ref{thm:OlgicaEtAl} one can derive an explicit upper bound on the length of the codes  whose existence was showed in  \cite{Olgica}  when $w = k \ln (n)$ by upper bounding  $n \binom{n-1}{k-1}$ with  $k\left(\frac{en}{k}\right)^k$. We report here the obtained result.

\begin{thm}[\cite{Olgica}]\label{th:OlgicaThm}
There exists a randomized algorithm to construct a $(k,n,d)$-superimposed code with length
\begin{equation}\label{eq:Oult}
	t \leq 2 d k \ln (n) + k \ln(n) +  e^2 k (k-1) \ln (n) - 2d + O(1).
\end{equation}
\end{thm}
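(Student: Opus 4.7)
The plan is to specialize the sufficient condition of Theorem \ref{thm:OlgicaEtAl} at the weight $w = \lceil k \ln n \rceil$ and then solve the resulting inequality for $t$. First I would apply the standard upper bound $\binom{n-1}{k-1} \leq (e(n-1)/(k-1))^{k-1}$ from \eqref{eq:ULBinom} (or the slightly cleaner form stated in the excerpt, $n\binom{n-1}{k-1} \leq k(en/k)^k$) so that the condition in Theorem \ref{thm:OlgicaEtAl} is implied by
\begin{equation*}
  k\left(\frac{en}{k}\right)^k \left(\frac{w(k-1)}{t-(2d+1)(w-1)}\right)^{w} < 1.
\end{equation*}

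Next I would rearrange this as a lower bound on $t$, namely
\begin{equation*}
  t > (2d+1)(w-1) + w(k-1)\cdot \bigl(k(en/k)^k\bigr)^{1/w},
\end{equation*}
and then substitute $w = k\ln n$ to control the last factor. The key identity is $n^{1/\ln n} = e$, which gives $(en/k)^{1/\ln n} \leq (en)^{1/\ln n} = e \cdot e^{1/\ln n}$, while $k^{1/(k\ln n)} \to 1$; together these show $\bigl(k(en/k)^k\bigr)^{1/w} \leq e^{2}(1+o(1))$ as $n$ grows. Plugging this into the lower bound on $t$ and expanding $(2d+1)(w-1) = 2dk\ln n + k\ln n - 2d - 1$ yields
\begin{equation*}
  t \leq 2dk\ln n + k\ln n + e^2 k(k-1)\ln n - 2d + O(1),
\end{equation*}
which is exactly \eqref{eq:Oult}.

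The work is essentially bookkeeping, and the only delicate point I would be careful with is ensuring that the rounding of $w = \lceil k\ln n\rceil$ to an integer, the slack in $\bigl(k(en/k)^k\bigr)^{1/w} \leq e^2(1+o(1))$, and the constant from $e^{1/\ln n} \to 1$ are all genuinely absorbed into the $O(1)$ term rather than contributing something that grows with $n$, $k$, or $d$. Since all these slacks are bounded by constants uniformly in the parameter range where the theorem is applied (in particular $w \geq 2$), this is straightforward, and no technical obstacle beyond careful constant-tracking arises.
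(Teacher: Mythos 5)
Your proposal is correct and follows exactly the route the paper indicates for Theorem~\ref{th:OlgicaThm}: specialize Theorem~\ref{thm:OlgicaEtAl} at $w = k\ln n$, bound $n\binom{n-1}{k-1}\leq k\left(\frac{en}{k}\right)^k$, and solve the resulting inequality for $t$. One small point of precision: the factor $\bigl(k(en/k)^k\bigr)^{1/w}$ equals $e\cdot e^{1/\ln n}\cdot k^{(1-k)/(k\ln n)}\leq e^{1+1/\ln n}\leq e^2$ for $n\geq 3$, so the bound by $e^2$ is exact rather than ``$e^2(1+o(1))$'' --- which matters, since a genuine multiplicative $(1+o(1))$ slack on the $e^2k(k-1)\ln n$ term could not be absorbed into the additive $O(1)$.
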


It can be seen that for $n$ sufficiently large, the upper bound on the length $t$ 
of $(k,n,d)$-superimposed codes
given in 
our Theorem \ref{thm:algosuperimposed} improves on the upper bound  
given in Theorem \ref{thm:OlgicaEtAl} of \cite{Olgica}. 
As observed in Section \ref{sec:Intro}, the algorithm provided in \cite{Olgica} is a Montecarlo randomized algorithm that constructs a $(k,n,d)$-superimposed code whose length is upper bounded by 
(\ref{eq:Oult}). In order to transform the algorithm given in \cite{Olgica}
into a Las Vegas randomized 
algorithm that \emph{always} outputs a {correct} $(k,n,d)$-superimposed code,
one can perform the following steps:
1) Generate a random matrix in accordance with the probabilities specified in 
Theorem 2 of
\cite{Olgica}, 
2) check whether the matrix  satisfies the properties of 
Definition \ref{def:dsuper} and, if not, repeat the experiment till one obtains a matrix with the desired property. 
However, it is known that the problem of checking whether a matrix satisfies superimposed-like properties is considered computationally infeasible (see, e.g., \cite{Cheng,DuK}) and no algorithm of complexity less
that $\Theta(n^k)$ is known.
On the other hand, our result   provides a randomized algorithm
of average time complexity $\Theta( k(k+d)n^2\ln n)$, that is, polynomial both in $n$ and $k$,
to construct $(k,n,d)$-superimposed codes of length not greater than that of 
\cite{Olgica}.

\remove{

Now, we provide a new bound on the minimum length of $(k,n,d)$-superimposed codes that improves the results of Theorems \ref{thm:algosuperimposed} and \ref{th:OlgicaThm}. 

\begin{thm}\label{th:LLExistence}
There exists a $(k,n,d,w)$-superimposed code of length $t$,  where $t$ is the minimum integer such that the following inequality holds
\begin{equation}\label{eq:standBound}
n \binom{n-1}{k-1} \left( \frac{w (k-1) - \frac{w-1}{2}}{t - (w-1) d - \frac{w-1}{2}} \right)^w \leq 1.
\end{equation}
\end{thm}

\begin{proof}
Let $M$ be a $t \times n$ binary matrix, where each column $\mathbf{c}$ is picked uniformly at random between the set of all distinct binary vectors of length $t$ such that each column has weight $w$ and  any two 1's  in each column of $M$  are  separated by a run of  at least $d$ 0's.  Therefore by Lemma \ref{lem:enumerative} we have that
$$\Pr(\mathbf{c}) = \binom{t - (w-1) d}{w}^{-1}\,.$$
For a given index $i \in [1, n]$ and a set of column-indices $B$, $|B| = k-1$, $i \not \in B$,  let $E_{i, B}$ be the event such that for every row in which $\mathbf{c}_i$ (the $i$-th column) has $1$,  there exists an index $j \in B$ such that $\mathbf{c}_j$ has $1$ in that same row. We can write this event in terms of supports as $\supp(\mathbf{c}_i)\subseteq \supp(\mathbf{c}_B)$.  There are $n \binom{n-1}{k-1}$ such events.  We can express the probability of such an event as follows
\begin{align}\label{eq:prob}
	\Pr(E_{i, B}) = \sum_{c' = (c'_1, \ldots, c'_{k-1})}  \Pr(\mathbf{c}_B = c') \cdot \Pr(\supp(\mathbf{c}_i)\subseteq \supp(\mathbf{c}_B) |  \mathbf{c}_B = c'),
\end{align}
where we have denoted with $\mathbf{c}_B$ the vector $(\mathbf{c}_{j_1}, \ldots, \mathbf{c}_{j_{k-1}})$ in which  $j_1, \ldots,  j_{k-1}$ are the elements of $B$.  The sum in \eqref{eq:prob} is over all the possible configurations of $k-1$ vectors of length $t$, weight $w$ and the distance between ones in each column is at least $d$.  Then, we can upper bound \eqref{eq:prob} by the maximum of $\Pr(\supp(\mathbf{c}_i)\subseteq \supp(\mathbf{c}_B) |  \mathbf{c}_B = c')   $ over all $k-1$ vectors $c'= (c'_1, \ldots, c'_{k-1})$. 
Therefore, we can consider the worst-case scenario where the $k-1$ columns of $M$ with indices in $B$ maximize this probability.  It can be seen that the maximum is achieved when the $w (k-1)$ ones of the $k-1$ columns indexed by $B$ are placed in $w (k-1)$ different rows.
Hence,
\begin{equation}\label{eq:upperbound}
\Pr(E_{i, B}) \leq \frac{\binom{w(k-1)}{w}}{\binom{t-(w-1)d}{w}} \, .
\end{equation}
Using Corollary \ref{cor:boundBinomial} we upper bound \eqref{eq:upperbound} as follows
\begin{equation*}
\Pr(E_{i, B}) \leq  \left( \frac{w (k-1) - \frac{w-1}{2}}{t - (w-1) d - \frac{w-1}{2}} \right)^w \, .
\end{equation*}
The number of such events is equal to $n \binom{n-1}{k-1}$. Then, if the probability that none of the events $E_{i,A}$ occurs is strictly positive then there exists a matrix $M$ that is a $(k,n,d,w)$-superimposed code of length $t$. Therefore if 
$$
    \Pr(\cup_{i,A} E_{i,A}) \leq n \binom{n-1}{k-1} \left( \frac{w (k-1) - \frac{w-1}{2}}{t - (w-1) d - \frac{w-1}{2}} \right)^w < 1\,,
$$
then the theorem follows.
\end{proof}

It is clear that our bound given in Theorem \ref{th:LLExistence} is better than the bound given in Theorem \ref{thm:OlgicaEtAl} since
$$
	 \frac{w (k-1) - \frac{w-1}{2}}{t - (w-1) d - \frac{w-1}{2}} \leq \frac{w(k-1)}{t - (2d+1)(w-1)}
$$
for all positive integers $w,  k, d$.  

\begin{cor}\label{cor:impBound2}
There exists a $(k,n,d)$-superimposed code of length $t$, where
\begin{equation*}
	t \leq dk  \ln n + e^2 k (k-1) \ln n - \frac{1}{2} k \ln n - d + O(1) \,.
\end{equation*}
\end{cor}
\begin{proof}
Substitute $w = k \ln n$ in \eqref{eq:standBound} and upper bound $n \binom{n-1}{k-1}< k \left( \frac{e n}{k} \right)^k$.
Therefore we obtain
\begin{equation*}
	t \leq d \left(k \ln n - 1\right) + \frac{k}{2} \ln n + e^2  k \left(k -\frac{3}{2} \right) \ln n  + O(1).
\end{equation*}
\end{proof}
It is easy to see that the bound of Corollary \ref{cor:impBound2} improves the one of Theorems \ref{thm:algosuperimposed} and \ref{th:OlgicaThm} for every $n$, $k$ and $d$, but unfortunately it only provides us a construction algorithm of complexity $O(n^k)$.

\section*{Acknowledgements}
The work of A. A. Rescigno and U. Vaccaro was partially supported by project SERICS (PE00000014) under the NRRP MUR program funded by the EU--NGEU.
}

\end{document}